\newcommand{\F}{\vspace*{\smallskipamount}}
\newcommand{\FFF}{\vspace*{\bigskipamount}}
\newcommand{\BBB}{\vspace*{-\bigskipamount}}
\newcommand{\cD}{\mathcal{D}}
\newcommand{\cO}{\mathcal{O}}
\newcommand{\mE}{\mathbb{E}}
\newcommand{\Paragraph}[1]{\BBB\paragraph{#1}}
\newcommand{\remove}[1]{}
\newlength{\pagewidth}
\newlength{\captionwidth}
\newcommand{\qed}{\hfill $\square$ \smallbreak}
\newenvironment{proof}{\noindent{\bf Proof:}}{\qed}
\newtheorem{theorem}{Theorem}
\newtheorem{lemma}{Lemma}
\newtheorem{proposition}{Proposition}
\begin{document}

\baselineskip           	3ex
\parskip                	1ex

\title{		Naming a Channel with Beeps \footnotemark[1] \FFF\FFF\FFF}

\author{	Bogdan S. Chlebus\,\footnotemark[2]   	
		\and
		Gianluca De Marco\,\footnotemark[3]
		\and
		Muhammed Talo\,\footnotemark[4]}

\footnotetext[2]{This paper was published as~\cite{ChlebusDT-FI17}.}

\footnotetext[2]{Department of Computer Science and Engineering, 
			University of Colorado Denver, 
			Denver, Colorado 80217, USA.
			The work of this author was supported by the National Science Foundation under Grant 1016847.}

\footnotetext[3]{Dipartimento di Informatica,
                  Universit\`a degli Studi di Salerno,
                  Fisciano, 84084 Salerno, Italy.}
                  
\footnotetext[4]{Bilgisayar M\"{u}hendisli\u{g}i,
		Munzur \"{U}niversitesi,
		62000 Tunceli, Turkey.}

\date{}

\maketitle

\vfill


\begin{abstract}
We consider a communication channel in which the only available mode of communication  is transmitting beeps.
A beep transmitted by a station attached to the channel reaches all the other stations instantaneously.
Stations are anonymous, in that they do not have any individual identifiers.
The algorithmic goal is to assign names to the stations in such a manner that the names make a contiguous segment of positive integers starting from~$1$.
We develop a Las Vegas naming algorithm, for the case when the number of stations~$n$ is known, and a Monte Carlo algorithm, for the case when the number of stations~$n$ is not known.
The given randomized algorithms are provably optimal with respect to the expected time $\cO(n\log n)$, the expected number  of used random bits $\cO(n\log n)$, and the probability of error.

\vfill

~

\noindent
\textbf{Key words:}
beeping channel,
anonymous network,
randomized algorithm,
Las Vegas algorithm,
Monte Carlo algorithm,
lower bound.
\end{abstract}

\vfill

\thispagestyle{empty}

\setcounter{page}{0}

\newpage


\section{Introduction}

\label{sec:introduction}

The considered model of communication is anonymous channel with beeping.
There are a number of stations attached to the channel that are devoid of any identifying features they can refer to during communication.
The goal is to have the stations assign names to themselves by executing a distributed communication algorithm.

The stations communicate among themselves in synchronous rounds.
All of them start together in the same round.
The channel provides a binary feedback to all the attached stations: when no stations transmit then nothing is sensed on the communication medium, and when some station does transmit then every station detects a beep.

A beeping channel resembles multiple-access channels, in that it can be interpreted as a single-hop radio network.
The difference between the two models is in the feedback they provide.
Multiple access channels come in two variants: with and without collision detection.
A multiple access channel without collision detection provides the following binary feedback: when exactly one station transmits then the transmitted message is heard by every station, and otherwise, when either no station transmits or multiple stations do, then this results in each station sensing silence.
A multiple access channel with collision detection provides the following ternary feedback: silence occurs when no station transmits, a message is heard when exactly one station transmits, and collision occurs when multiple stations transmit simultaneously, which prevents any of the transmitted messages to be heard and is sensed as distinct from silence.

A channel with beeping has its communication capabilities restricted only to carrier sensing, without even the functionality of transmitting specific bits as messages.
The only apparent mode of exchanging information on such a synchronous channel with beeping is to suitably encode it by sequences of beeps and silences.

Modeling communication by a mechanism as limited as beeping has been motivated by diverse aspects of communication and distributed computing.
Beeping provides a detection of collision on a transmitting medium by sensing it.
Communication only by carrier sensing can be placed in a general context of investigating wireless communication on the physical level and modeling interference of concurrent transmissions, of which the signal-to-interference-plus-noise ratio (SINR) model is among the most popular and well studied; see~\cite{GoussevskaiaPW10-survey,JurdzinskiK-encyclopedia,SchmidW06}.
Beeping is then a very limited mode of wireless communication, with feedback in the form of either a simplest signal, which is identical to interference, or lack thereof.

Another motivation comes from biological systems, in which agents exchange information in a distributed manner, while the environment severely restricts how such agents communicate; see~\cite{AfekABHBB11,JeavonsSX2016,NavlakhaB2014}.

Finally, communication with beeps belongs to the area of distributed computing by weak devices, where the involved agents have restricted computational and communication capabilities.
In this context, the devices are modeled as finite-state machines that communicate asynchronously by exchanging states or messages from a finite alphabet.
Examples of this approach include the ``population-protocols'' model introduced by Angluin et at.~\cite{AngluinADFP2006}, (see also~\cite{AngluinAFJ08,AspnesR07,MichailCS11}), and the ``stone-age'' distributed computing model proposed by Emek and Wattenhofer~\cite{EmekW13}.

\Paragraph{Previous work.}

The model of  communication by  discrete beeping was introduced by Cornejo and Kuhn~\cite{CornejoK10}, who considered a general-topology wireless network in which nodes use only carrier sensing to communicate, and developed algorithms for node coloring.
They were inspired by ``continuous'' beeping studied by Degesys et al.~\cite{DegesysRPN07} and Motskin et al.~\cite{MotskinRSG09}, and by the implementation of coordination by carrier sensing given by Flury and Wattenhofer~\cite{FluryW10}.

Afek et al.~\cite{AfekABCHK13} considered the problem to find a maximal independent set of nodes in a distributed manner when the nodes can only beep, under additional assumptions regarding the knowledge of the size of the network, waking up the network by beeps, collision detection among concurrent beeps, and synchrony.
Brandes et al.~\cite{BrandesKKPW16} studied the problem of randomly estimating the number of nodes attached to a single-hop beeping network.  
Czumaj and Davies~\cite{CzumajD15} approached systematically the tasks of deterministic  broadcasting, gossiping, and multi-broadcasting on the bit level in general-topology symmetric beeping networks.
In a related work, Hounkanli and Pelc~\cite{HounkanliP16} studied deterministic broadcasting in asynchronous beeping networks of general topology with various levels of knowledge about the network.
F{\"{o}}rster et al.~\cite{ForsterSW14} considered leader election by deterministic algorithms in general multi-hop networks with beeping.
Gilbert and Newport~\cite{GilbertN15} studied the efficiency of leader election in a beeping single-hop channel when nodes are state machines of constant size with a specific precision of randomized state transitions.
Huang and Moscibroda~\cite{HuangM13} considered the problems of identifying subsets of stations connected to a beeping channel and compared their complexity to those on multiple-access channels.
Yu et al.~\cite{YuJYLC15} considered the problem of constructing a minimum dominating set in networks with beeping.

\Paragraph{Related work.}

A beeping channel is related to multiple access channels and single-hop radio networks~\cite{Chlebus-randomized-radio-chapter-2001}.
The similar task to the one considered in this paper, to have all stations transmit successfully, once by each station among the $n$ stations, can be accomplished in the expected time~$\cO(n)$ on a multiple access channel.
This variant of the ``contention resolution problem'' for such channels has been the topic of continuous interest, since multiple access channels started to be investigated
\cite{AntaMM13,Capetanakis79,CidonS88,DeMarcoK10,DeMarcoK17,DeMarcoK15,GreenbergFL87,Hayes78,TsybakovM1978}.
In view of the lower bound $\Omega(n\log n)$ on time of naming for beeping channels, given in this paper, resolving contention in the model of a beeping channel is demonstrably more costly, in that it involves the additional logarithmic multiplicative overhead.
This can be attributed to the key difference between the two models, in that a multiple access channel differentiates between single and multiple transmissions by its feedback, while a beeping channel returns a beep in both cases. 

The requirements of the contention resolution problem can be weakened  such that we want to hear just some node as soon as possible; this problem is called \emph{waking up} the network when nodes start spontaneously rather than coordinated from the start.
This problem was studied initially for multiple access channels \cite{DeMarcoPS07,GasieniecPP-JDM01}, and further for radio networks~\cite{ChlebusGKR-ICALP05} and multi-channel networks~\cite{ChlebusDK16}.

Networks of nodes communicating by beeps share common features with radio networks with collision detection.
Ghaffari and Haeupler~\cite{GhaffariH13} gave efficient leader election algorithm by treating collision detection as ``beeping'' and transmitting messages as bit strings.
Their approach by way of ``beep waves'' was adopted to broadcasting in networks with beeping by Czumaj and Davies~\cite{CzumajD15}.
In a related work, Ghaffari et al.~\cite{GhaffariHK15} developed randomized broadcasting and multi-broadcasting in radio networks with collision detection.

Communication in anonymous networks was first considered by Angluin~\cite{Angluin80}, who showed that randomization was necessary to name nodes in suitably symmetric communication settings.
The work that followed was either on specific network topologies, like rings or hypercubes, or on problems in general message-passing systems.
The early work on anonymous rings was done by Attiya et al.~\cite{AttiyaSW88}, while anonymous hypercubes were studied first by Kranakis and Krizanc~\cite{KranakisK97}.
Algorithmic problems in anonymous networks of general topologies were considered by 
Afek and Matias~\cite{AfekM94} and Schieber and Snir~\cite{SchieberS94}.
Angluin et al.~\cite{AngluinAFJ08} considered self-stabilizing protocols for anonymous asynchronous agents.

Anonymous systems of asynchronous processes communicating by shared memory were studied by Lipton and Park~\cite{LiptonP-1990}, E{\u{g}}ecio{\u{g}}lu and Singh~\cite{EgeciogluS94},  Panconesi et al.~\cite{PanconesiPTV97}, Kutten et al.~\cite{KuttenOP00}, and Buhrman et al.~\cite{BuhrmanPSV-2006}, among others.
Chlebus et al.~\cite{ChlebusMT15,ChlebusDT-OPODIS17} considered synchronous anonymous distributed computing with shared read-write memory.

Impossibility results and lower bounds are surveyed by Attiya and Ellen~\cite{Attiya-Ellen-book-2014} and Fich and Ruppert~\cite{FichR03}.
Yao's minimax principle was formulated by Yao~\cite{Yao77}, with an alternative direct proof given by Fich et al.~\cite{FichHRW85}; see also the book by Motwani and Raghavan~\cite{MotwaniR95} for a presentation of applications.

\section{Technical preliminaries}

\label{sec:preliminaries}

A beeping channel is a network consisting of some $n$ stations connected to a communication medium.
The channel provides a very limited feedback: when no stations transmit at a point in time then nothing is sensed on the communication medium, and when some station does transmit then every station detects a beep.

We consider synchronous  beeping channels, in the sense that an execution of a communication algorithm is partitioned into consecutive rounds.
All the stations start execution together.
In each round, a station may either beep or pause.
When some station beeps in a round, then each station hears the beep in this round, otherwise all the stations receive silence as feedback in this round.
When multiple stations beep together in a round then this creates a \emph{collision}.

We say that a parameter of a communication network is \emph{known} when it can be used in codes of algorithms.
The relevant parameter used in this paper is the number of stations~$n$.
We consider two cases, in which either $n$ is known or it is not, respectively.

The stations attached to the channel are anonymous in that they do not have any predetermined identifiers that can be referred to when executing a communication algorithm.
The stations are distinguishable by an external observer, and we may refer to such an observer's ``names'' when referring to stations when we want to distinguish the stations among themselves. 

Randomized algorithms use random bits, understood as outcomes of tosses of a fair coin.
All different random bits used by our algorithms are considered stochastically independent from one another.

Randomized algorithms are often categorized as Las Vegas or Monte Carlo, see~\cite{MotwaniR95}.
We use these terms in the following meaning.
An algorithm is \emph{Las Vegas} when it terminates almost surely and it returns a correct output upon termination.
An algorithm is \emph{Monte Carlo} when it terminates almost surely but an error may occur when the algorithm terminates; the probability of error converges to zero with the number of communicating agents converging to infinity.
An ``error'' occurs when the output generated by the algorithm does not satisfy all the desired properties, as specified in the correctness requirements.
Our naming algorithms have as  their goal to assign unique identifiers to the stations, moreover we want names to be integers in the contiguous range $\{1,2,\ldots,n\}$; we denote this rage as~$[n]$.
The Monte Carlo naming algorithm that we develop  assigns names that completely fill in an interval of integers of the form $[k]$, for $k\le n$.
If it happens that $k<n$ then this means that there are duplicate integers assigned as names, which is the only form of error that can possibly occur.

We will use a procedure to detect collisions, called \textsc{Detect-Collision}, whose pseudocode is in Figure~\ref{fig:procedure-collision}.
The procedure is executed by a group of stations, and they all start their executions simultaneously.
The procedure takes two rounds.
Each of the participating stations simulates the toss of a fair coin, with the outcomes mutually independent among the participating stations.
Determined by the outcome of a toss, a station beeps either in the first round or the second one of the allocated pair of rounds.
A collision is detected only when two consecutive beeps are heard.


\begin{figure}[t]
\rule{\textwidth}{0.75pt}

\F 
\textbf{Procedure} \textsc{Detect-Collision}

\rule{\textwidth}{0.75pt}
\begin{center}
\begin{minipage}{\pagewidth}
\begin{description}
\item[\tt toss$_v$] $\gets$  outcome of a random coin toss\ 
\item[\tt if] $\texttt{toss}_v =  \texttt{heads}$ \hfill /$\ast$ first round $\ast$/
\begin{description}
\item[\texttt{then}] \texttt{beep else pause}   
\end{description}
\item[\tt if] $\texttt{toss}_v =  \texttt{tails}$ \hfill /$\ast$ second round $\ast$/
\begin{description}
\item[\texttt{then}] \texttt{beep else pause}   
\end{description}

\item[\tt return] (a beep was heard in each of the two rounds)
\end{description}
\end{minipage}
\FFF

\rule{\textwidth}{0.75pt}

\parbox{\captionwidth}{\caption{\label{fig:procedure-collision}
A pseudocode for a station $v$.
The procedure takes two rounds to be executed.
It detects a collision and returns ``true'' when a beep is heard in each of the two rounds, otherwise it does not detect a collision and returns ``false.''
}}
\end{center}
\end{figure}

\begin{lemma}
\label{lem:detect-collision}

If $k$ stations perform $m$ time-disjoint calls of procedure \textsc{Detect-Collision}, with each station participating in exactly one call, then collision is not detected in any of these calls with probability $2^{-k+m}$.
\end{lemma}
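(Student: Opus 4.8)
The plan is to decompose the event of interest over the $m$ calls and use independence. Write $k_i$ for the number of stations participating in the $i$-th call, so that $k_1 + k_2 + \cdots + k_m = k$; since a call with no participants trivially never detects a collision, I may assume each $k_i \ge 1$. First I would analyze a single call in isolation. By the specification of \textsc{Detect-Collision}, the call reports a collision exactly when a beep is heard in \emph{both} of its two rounds, so it fails to detect a collision precisely when every participating station produced the same coin outcome: either all $k_i$ tosses came up heads (all beep in the first round, none in the second) or all came up tails. These two events are mutually exclusive because $k_i \ge 1$, and each has probability $2^{-k_i}$ since the coins are fair and the tosses are mutually independent. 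Hence the $i$-th call detects no collision with probability $2 \cdot 2^{-k_i} = 2^{-k_i+1}$.

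Next I would combine the calls. Because the calls are time-disjoint and each station takes part in exactly one of them, the sets of random bits consumed by distinct calls are disjoint; by the standing assumption that all random bits used by the algorithm are stochastically independent, the events ``the $i$-th call detects no collision'' are mutually independent across $i = 1, \dots, m$. Multiplying the per-call probabilities therefore gives
$$
\prod_{i=1}^{m} 2^{-k_i+1} \;=\; 2^{-(k_1 + k_2 + \cdots + k_m) + m} \;=\; 2^{-k+m},
$$
which is the claimed bound.

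I do not expect a genuine obstacle here; the argument is short and essentially a union of ``all-heads''/``all-tails'' events followed by a product over independent trials. The only point that needs a moment of care is the disjointness of the all-heads and all-tails events — this is exactly where $k_i \ge 1$ is used — together with the (implicit) convention that empty calls are excluded, and the explicit appeal to the model's independence of distinct random bits to justify taking the product.
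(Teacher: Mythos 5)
Your proof is correct and follows essentially the same route as the paper's: you compute the per-call probability of missing a collision as $2^{-k_i+1}$ (via the all-heads/all-tails decomposition, where the paper phrases the same count by deferred decisions, fixing one station's toss and requiring the others to match it) and then multiply over the $m$ time-disjoint calls using independence to get $2^{-k+m}$. Your explicit remarks on the disjointness of the two monochromatic events and on $k_i\ge 1$ are fine refinements of the same argument, not a different one.
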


\begin{proof}
Consider a call of \textsc{Detect-Collision} performed concurrently by $i$ stations, for $i\ge 1$.
We argue by ``deferred decisions.'' 
One of these stations tosses a coin and determines its outcome~$X$.
The other $i-1$ stations participating concurrently in this call also toss their coins.
Here we have $i-1\ge 0$, so it could be the case that there is no such a station; if $i=1$ then collision cannot be detected.
The only possibility not to detect a collision in the case of $i-1>0$ is for each among  these   $i-1$ stations to also produce~$X$.
This happens with probability $2^{-i+1}$ in one call.
The probability of producing only outcome \texttt{false} during the $m$ calls is the product of these probabilities.
When we multiply them out over $m$ instances of the procedure being performed, then the outcome is $2^{-k+m}$, because the numbers~$i$ sum up to~$k$ and the number of factors is~$m$.
\end{proof}

\Paragraph{Pseudocode conventions and notations.}

We present algorithms formally  using pseudocodes, like the one in Figure~\ref{fig:procedure-collision}.
The convention we apply is that all the stations stay synchronized to execute the same line of code in each  round.
This means that if a conditional statement is executed, like an if-then instruction or a while-loop, then a station that pauses through its duration still counts rounds  to stay synchronized, as if it were actively executing the code.

Each station has its private copy of every among the variables used in the pseudocode.
When the values of these copies may vary across the stations, then we add the station's external-observer ``name'' as a subscript of the variable's name to emphasize this, and otherwise, when all the copies of a variable are equal at all times across all the stations then no subscript is used.

We use the notation $\lg x$ to denote the logarithm to the base~$2$ of~$x$.
The base of logarithms is not specified when this is not necessary,  like in an asymptotic notation~$\cO(\log x)$.

\section{Lower bounds and impossibilities}
\label{sec:lower-bounds}

We will show impossibility results to justify methodological approach to naming algorithms we apply, and use lower bounds on performance metrics for such algorithms to argue about the optimality of the algorithms, as developed in subsequent sections.
We begin with an observation, formulated as Proposition~\ref{pro:impossibility-deterministic}, that if a system is sufficiently symmetric then randomness is necessary to break symmetry.
The given argument is standard and is given for completeness sake; see~\cite{Angluin80,Attiya-Ellen-book-2014,FichR03}.


\begin{proposition}
\label{pro:impossibility-deterministic}

There is no deterministic naming algorithm for a synchronous channel with beeping with at least two stations, in which all stations are anonymous, such that it eventually terminates and assigns proper names.
\end{proposition}

\begin{proof}
We argue by contradiction.
Suppose that there exists a deterministic algorithm that eventually terminates with proper names assigned to the anonymous stations.
Let all the stations start initialized to the same initial state.
The following invariant is maintained in each round: the internal states of the stations are all equal.

We proceed to show this invariant by induction on the round numbers.
The base of induction is satisfied by the assumption about the initialization.
For the inductive step, we assume that the stations are in the same state, by the inductive hypothesis.
Then either all of them pause or all of them beep in the next round, as determined by the state.
The effect is that either all of them hear their own beep or all of them pause and hear silence.
This results in the same internal state transition, which completes the inductive step.

When the algorithm eventually terminates, then each station assigns to itself the identifier determined by its state.
The identifier is the same in all stations because their states are the same, by the invariant.
This violates the desired property of names to be distinct, because there are at least two stations with the same name.
\end{proof}

Proposition~\ref{pro:impossibility-deterministic} justifies developing randomized naming algorithms.
We continue with ``entropy'' arguments; see the book by Cover and Thomas~\cite{CoverT-book} for a systematic exposition of information theory.

An execution of a randomized naming algorithm coordinates and translates random bits into names.
This same amount of entropy needs to be processed/communicated on the channel, by the Shannon's noiseless coding theorem.
An analogue of the following Proposition~\ref{pro:lower-bound-on-random-bits} was stated in~\cite{ChlebusMT15} for the model of synchronized processors communicating by reading and writing to shared memory.
The proof is similar, we include it here for completeness sake.


\begin{proposition}
\label{pro:lower-bound-on-random-bits}

If a randomized naming algorithm for a channel with beeping is executed by $n$ anonymous stations and is correct with probability~$p_n$ then it requires the total of $\Omega(n\log n)$ random bits generated with probability at least~$p_n$.
In particular, a Las Vegas naming algorithm uses $\Omega(n\log n)$ random bits almost surely.
\end{proposition}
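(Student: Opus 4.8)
The plan is to run a counting argument on the private random strings drawn by the stations, in the spirit of the analogous bound of~\cite{ChlebusMT15}, with anonymity doing the real work.

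First I would make precise the following structural fact implied by the model: in each round, a station's action (beep or pause) and its next state depend only on its code, on the prefix of its own private random string consumed so far, and on the sequence of beeps and silences heard on the channel up to that round; and the last of these is \emph{identical} for all stations, since the channel delivers the same feedback to everyone. Consequently, if in some execution two stations $u\neq v$ happen to draw the same private random string, then an easy induction on the round number shows that they traverse the very same sequence of states and, upon termination, assign themselves the same name. Hence in any execution in which the algorithm is correct---so that in particular the $n$ names are pairwise distinct---the $n$ private random strings $r_1,\dots,r_n$ that were consumed are pairwise distinct as binary strings.

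Next I would lower-bound the total number of consumed bits $\sum_{v} |r_v|$ using only the fact that $r_1,\dots,r_n$ are pairwise distinct. There are $2^{\ell}-1$ binary strings of length smaller than $\ell$, so with $\ell = \lfloor \lg n\rfloor - 1$ at most $2^{\ell}-1 < n/2$ of the strings $r_v$ can be shorter than $\ell$; therefore at least $n/2$ of them have length at least $\ell \geq \lg n - 2$, and hence $\sum_v |r_v| \geq \frac{n}{2}(\lg n - 2) = \Omega(n\log n)$. Thus every execution on which a correct naming is produced generates $\Omega(n\log n)$ random bits in total. Since ``the algorithm is correct'' is an event of probability at least $p_n$, and it is contained in the event ``at least $\Omega(n\log n)$ random bits are generated,'' the latter also has probability at least $p_n$, which is the main assertion. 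For the ``in particular'' clause, a Las Vegas algorithm terminates almost surely and is correct upon termination, hence is correct with probability $1$; specializing to $p_n = 1$ shows it uses $\Omega(n\log n)$ random bits almost surely.

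I expect the only subtle step to be the first one: extracting from the beeping-channel model the exact claim that the feedback stream is common to all stations while each station reads only its own coins, so that coincident random strings are genuinely indistinguishable and must yield a duplicated name. Once that is nailed down, what remains is the elementary estimate on distinct binary strings. A minor point to phrase carefully is that the ``number of random bits generated'' is itself a random variable, so the clean formulation of the conclusion is in terms of the probability of the event that $\Omega(n\log n)$ bits are used, exactly as the proposition states.
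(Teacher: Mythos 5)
Your proof is correct, but it takes a genuinely different route from the paper's. The paper argues via information theory: by symmetry the assignment of names induces the uniform distribution on the $n!$ orderings of the (externally visible) stations, so the output has Shannon entropy $\lg(n!)=\Theta(n\log n)$, and by the noiseless coding theorem this entropy must be supplied by the generated random bits. You instead argue pathwise: since all stations run the same code and the channel feedback is common to all, two stations that consume identical coin sequences are indistinguishable and receive the same name, so on every correct execution the $n$ consumed strings are pairwise distinct; a pigeonhole count on binary strings of length below $\lfloor\lg n\rfloor-1$ then forces total length at least $\tfrac{n}{2}(\lg n-2)$. Your indistinguishability step is sound (a station's knowledge of its own past actions is itself determined by its state, so the induction closes), and in fact a correct execution forces the stronger prefix-free property on the consumed strings, which via Kraft's inequality would recover the sharper constant $n\lg n$ of the entropy argument; distinctness alone already suffices for the stated $\Omega(n\log n)$. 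What each approach buys: yours is elementary and gives a deterministic lower bound on every correct sample path, which makes the ``with probability at least $p_n$'' conclusion and the Las Vegas specialization immediate; the paper's entropy framework is coarser here but is reused essentially unchanged to prove the $\Omega(n\log n)$ time lower bound of Proposition~\ref{pro:lower-bound-time}, where a counting argument on random strings would not directly apply.
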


\begin{proof}
Let us refer to the $n$ anonymous stations by the identifiers known only to an external observer; we call them the \emph{unknown names}.
The names given by an execution of a randomized naming algorithms are referred to as the \emph{given names}.

A permutation of the unknown names is produced by ordering the stations by the given names, with the respective probability.
The algorithm determines a probability distribution on these permutations, conditional on successful naming.
The conditional distribution is uniform because of the symmetry: all the processors execute the same code, without explicit private identifiers, and all use random bits produced by tossing fair coins.
This determines a probability space of $n!$ elementary events, each identified by a permutation of the unknown identifiers.
Each such an elementary event occurs with the conditional probability~$1/n!$.
The Shannon entropy of this random variable equals $\lg (n!)=\Theta(n\log n)$.
Setting on one permutation, by way of assigning names, requires the amount $\lg (n!)=\Theta(n\log n)$ of entropy, by the Shannon's noiseless coding theorem.
The source of this entropy are random bits generated by the stations.
\end{proof}

One round of an execution of a naming algorithm allows the stations that do not transmit to learn at most one bit, because, from the perspective of these stations, a round is either silent or there is a beep.
Intuitively, the running time is proportional to the amount of entropy that is needed to assign names.
This intuition leads to Proposition~\ref{pro:lower-bound-time}.
In its proof, we combine the Shannon's entropy~\cite{CoverT-book} with the Yao's principle~\cite{Yao77}.


\begin{proposition}
\label{pro:lower-bound-time}

A randomized naming algorithm for a beeping channel with $n$ stations  operates in $\Omega(n\log n)$ expected time, when it is either a Las Vegas algorithm or a Monte Carlo algorithm with the probability of error smaller than~$1/2$.
\end{proposition}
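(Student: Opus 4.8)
The plan is to combine a symmetry argument in the style of Proposition~\ref{pro:lower-bound-on-random-bits} with Yao's minimax principle and an information-theoretic analysis of the feedback that the channel delivers. As in that proof, I would order the $n$ stations by the names an execution assigns to them; this induces a permutation of their ``unknown names,'' and by the symmetry of anonymous stations running identical code with private fair coins, conditioned on the algorithm halting with a correct naming this permutation is uniform over the $n!$ permutations, so its Shannon entropy is $\lg(n!)=\Theta(n\log n)$. For a Las Vegas algorithm correctness is certain. For a Monte Carlo algorithm with error probability below $1/2$ the ``correct'' event has probability exceeding $1/2$; assuming toward a contradiction that the expected running time of $T$ rounds satisfies $\mathbb{E}[T]=o(n\log n)$, Markov's inequality makes the event $\{T\le c\,\mathbb{E}[T]\}$ overwhelmingly likely, so the event $\cG$ that the run is simultaneously short and correct has probability bounded below by a positive constant, and it is $\cG$ that the rest of the argument analyzes.

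Next I would invoke Yao's principle: because the stations are anonymous, a randomized naming algorithm is a mixture of the deterministic algorithms obtained by fixing the coin tapes, and its expected running time is at least the expected running time of the best deterministic (still anonymous) algorithm against a suitably symmetric distribution on those tapes. This lets me work with a fixed deterministic algorithm, for which an execution is a root-to-leaf path in a binary tree branched by the single bit of feedback (beep versus silence) returned each round; since all stations halt in the same round, the halting time is common knowledge and hence determined by the feedback string, so a run of length $t$ reaches a leaf at depth $t$ labelled by that string, and the name each station outputs is a function of the string together with the station's own coins. A short run thus means a leaf of small depth, of which there are few.

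The heart of the proof, and the step I expect to be the main obstacle, is to show that these leaves cannot all be shallow: that producing a uniformly random permutation of $n$ unknown names, using $n$ private coin streams and a single shared binary stream of length $T$ (the feedback), forces $\mathbb{E}[T]=\Omega(n\log n)$. The difficulty is that the private streams are ``free'' and are in fact the sole source of the permutation's entropy --- indeed the feedback string is independent of the induced permutation, by the same symmetry --- so a crude count of feedback strings cannot suffice; the shared stream is instead the bottleneck for \emph{coordination}, and this must be quantified precisely. The route I would take is a chain-rule decomposition of $\lg(n!)$ over the stations together with a data-processing and subadditivity argument showing that, for each station $v$, a run of good probability must reserve $\Omega(\log n)$ rounds whose feedback carries information about $v$'s own name that the other stations cannot supply privately (they never observe $v$'s coins, yet must avoid $v$'s name), and that such rounds cannot be amortized across more than a constant number of stations at a time. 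Summing the $\Omega(\log n)$ contributions over the $n$ stations gives the bound; making the ``cannot be amortized'' claim rigorous --- that is, separating this argument from the weaker $\Omega(n)$-type estimate that the naive accounting yields --- is where the real work lies.
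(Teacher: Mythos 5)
Your proposal is not a complete proof: the step you yourself flag as ``where the real work lies'' --- showing that each station forces $\Omega(\log n)$ rounds of feedback whose information cannot be amortized over more than constantly many stations --- is precisely the content of the proposition, and you give no argument for it. The honest counting that your setup does support is much weaker. Conditioned on a feedback transcript $f$, a station's name is a fixed function $g(\cdot,f)$ of its private coin string, so correctness requires the $n$ i.i.d.\ strings to form a transversal of the partition of coin strings induced by $g(\cdot,f)$; for any fixed $f$ this event has probability at most $n!/n^n\approx e^{-n}\sqrt{2\pi n}$, and a union bound over the at most $2^t$ transcripts of length at most $t$ yields only $t=\Omega(n)$. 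Nothing in your sketch closes the gap between $\Omega(n)$ and $\Omega(n\log n)$: the chain-rule/data-processing plan is stated as a goal, not carried out, and in particular it would have to rule out a transcript that acts like a description of a minimal perfect hash function for the realized set of coin strings, which information-theoretically costs only $\Theta(n)$ bits.

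That said, your diagnosis of the obstacle is correct, and it is worth noting that it applies equally to the argument the paper gives. The paper's proof treats the round-by-round feedback sequence as ``a binary codeword representing a specific assignment of names'' and applies Shannon's noiseless coding theorem to the uniform distribution on the $n!$ orderings of the unknown names. But the transcript is a symmetric function of the private coin strings: permuting the strings among the stations leaves the transcript unchanged while permuting the induced naming, so conditioned on the transcript the permutation is still uniform --- the transcript determines the name-assignment \emph{function} $g(\cdot,f)$, not the permutation of unknown names, exactly as you observe. For the coding-theorem step to give $\Omega(n\log n)$ one must exhibit a quantity of entropy $\Omega(n\log n)$ that the transcript actually determines, and neither your proposal nor the paper's proof identifies one. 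In short: you have correctly located the hard step, but you have not proved it.
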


\begin{proof}
We apply the Yao's minimax principle to bound the expected time of a randomized algorithm by  the distributional complexity of naming.
We consider Las Vegas algorithms first. 

A randomized  algorithm using strings of random bits generated by stations can be considered as a  deterministic algorithm~$\cD$ on all possible assignments of such (sufficiently long) strings of bits to stations as their inputs.
We consider assignments of strings of bits of equal length, with the uniform distribution among all such assignments of strings of the same length.
On a given assignment of input strings of bits to stations, the deterministic algorithms either assigns proper names or fails to do so.
A failure to assign proper names with some input is interpreted as the randomized algorithm continuing to work with  additional random bits, which comes at an extra time cost.
This is justified by a combination of two factors. 
One is that the algorithm is Las Vegas, so it halts almost surely and with a correct output. 
The other is that the probability to assign a specific finite sequence as a prefix of a used sequence of random bits is positive.
So, if starting with a specific string of bits, as a prefix of a possibly longer needed string, would mean the inability to terminate with a positive probability, then the naming algorithm would not be Las Vegas.

The common length of these input strings is a parameter.
We consider all sufficiently large positive integer values for this parameter such that there exist strings of bits of this length resulting in assignments of proper names.
For a given length of input strings, we remove input assignments that do not result in assignment proper names and consider a uniform distribution of the remaining inputs.
This is the same as the uniform distribution conditional on the algorithm terminating with input strings of bits of a given length.

Let us consider such a deterministic algorithm~$\cD$ assigning names, and using strings of bits at stations as inputs, these strings being of a fixed length, assigned under a uniform distribution for this length, and such that they result in termination.
An execution of this algorithm produces a ``feedback sequence of bits,'' where we translate the feedback from the channel round by round, say, with symbol~$1$ representing a beep and  symbol~$0$ representing silence.
Each such a feedback sequence translates into a specific assignment of names, because this is the only information shared by all the stations.
These feedback sequences also have  a uniform distribution, by the same symmetry argument as used in the proof of Proposition~\ref{pro:lower-bound-on-random-bits}.
The expected length of such a feedback sequence is the expected time of algorithm~$\cD$.
The expected time of algorithm~$\cD$ is therefore at least $\lg n!=\Omega(n\log n)$, by the Shannon's noiseless coding theorem.
We conclude that, by the Yao's principle, the original randomized Las Vegas algorithm has the expected time that is $\Omega(n\log n)$.

A similar argument, by the Yao's principle, applies to a Monte Carlo algorithm that is incorrect with a constant probability smaller than~$1/2$.
The only difference in the argument is that when a given assignment of input string bits does not result in a proper assignment of names, then either the algorithm continues to work with more bits for an extra time, or terminates with error.
\end{proof}

Next, we consider the case when the number of stations~$n$ is unknown.
The following Proposition~\ref{pro:probability-of-error} is about the inevitability of error in such situations, which is due to the fact that we cannot compare the number of randomly assigned names to the (unknown) number of computing/communicating agents.
Intuitively, when two computing/communicating agents generate the same string of bits, then their actions are the same, and so they get the same name assigned.
In other words, we cannot distinguish the case when there is only one such an agent present from  cases when at least two of them work in unison.

Proposition~\ref{pro:probability-of-error} was formulated in~\cite{ChlebusMT15} for the model of synchronous parallel random-access machine, in which processors communicate among themselves by reading from and writing to shared read-write registers.
It applies to a synchronous beeping channel, when we understand actions of stations as either beeping or pausing in a round; we include the proof to make the exposition self-contained.


\begin{proposition}[\cite{ChlebusMT15}]
\label{pro:probability-of-error}

For an unknown number of stations~$n$, if a randomized naming algorithm is executed by $n$ anonymous stations, then an execution is incorrect, in that duplicate names are assigned to different stations, with probability that is at least~$n^{-\Omega(1)}$, assuming that the algorithm uses $\cO(n\log n)$ random bits with probability $1-n^{-\Omega(1)}$.
\end{proposition}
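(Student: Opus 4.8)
The plan is to exhibit a ``twin collision'' event that is a naming error and still occurs with probability $n^{-\Omega(1)}$, obtained by coupling the $n$-station execution with the execution on $n-1$ stations. The first ingredient is the structural consequence of anonymity and of the single-bit feedback: since every station runs the same deterministic code and, from the viewpoint of any station, a round is completely described by one feedback bit (beep or silence), a station's whole trace --- the rounds in which it beeps, the number of random bits it consumes, and the name it eventually returns --- is a deterministic function of its own random bit string together with the sequence of channel feedbacks, and the feedback in each round is a symmetric function of the stations' traces. Hence if two distinct stations consume bit strings that agree on a common prefix at least as long as the longer of the two consumed prefixes, they produce identical traces, hence the same name, which is a naming error.

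Next comes the coupling. Realize $r_1,\dots,r_n$ as independent infinite fair-coin sequences and fix two distinguished stations, say $1$ and $2$. Reveal $r_1$ and $r_3,\dots,r_n$ but not $r_2$, and consider the auxiliary ``copy run'' in which station $2$ is handed the string $r_1$. Then stations $1$ and $2$ behave identically, so the copy run produces exactly the same sequence of channel feedbacks --- and station $1$ reads exactly the same number $\tau$ of random bits --- as the genuine $(n-1)$-station execution on the strings $r_1,r_3,\dots,r_n$ (station $2$ beeps precisely when station $1$ does, contributing nothing new to the feedback). Thus $\tau$ is already determined by $r_1,r_3,\dots,r_n$ alone, and in the copy run stations $1$ and $2$ each consume exactly $\tau$ bits and return the same name.

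Now I would return to the genuine $n$-station execution by revealing the true, independent $r_2$. On the event $D=\{\,r_2\text{ agrees with }r_1\text{ on the first }\tau\text{ bits}\,\}$, a round-by-round induction shows that the genuine execution coincides with the copy run up to and including the round in which station $2$ halts (after consuming its $\tau$ matching bits), so stations $1$ and $2$ return the same name, a naming error. Since $\tau$ and the first $\tau$ bits of $r_1$ are measurable with respect to the already-revealed variables, $\Pr[\,D\mid r_1,r_3,\dots,r_n\,]=2^{-\tau}$, and therefore
\[
   \Pr[\text{naming error}]\ \ge\ \Pr[D]\ =\ \mathbb{E}\!\left[2^{-\tau}\right].
\]
To finish, let $G$ be the event that the $(n-1)$-station execution uses at most $c(n-1)\log(n-1)\le cn\log n$ random bits (where $c$ is the constant hidden in the hypothesis); by hypothesis $\Pr[G]\ge 1-(n-1)^{-\Omega(1)}\ge \tfrac12$ for all large $n$. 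Writing $T_1,\dots,T_{n-1}$ for the per-station bit counts of that run, so that $\tau=T_1$, these counts are exchangeable and $G$ is a symmetric event, whence $\mathbb{E}[\,\tau\mid G\,]=\tfrac{1}{n-1}\,\mathbb{E}\!\left[\sum_i T_i \mid G\right]\le \tfrac{cn\log n}{n-1}\le 2c\log n$. Since $x\mapsto 2^{-x}$ is convex, Jensen's inequality gives $\mathbb{E}[2^{-\tau}]\ge \Pr[G]\cdot 2^{-\mathbb{E}[\tau\mid G]}\ge \tfrac12\,n^{-2c}=n^{-\Omega(1)}$, which is the claimed bound.

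The part I expect to require genuine care is the coupling in the two middle steps: making rigorous, in the presence of adaptive random-bit reading and of a feedback signal that depends globally on all stations, both that twinning station $2$ with station $1$ exactly reproduces the $(n-1)$-station feedback (so that $\tau$ really equals the bit count of station $1$ there) and that mere agreement on a $\tau$-bit prefix forces $N_1=N_2$ in the genuine run. Once that is in place, the remainder is the short Markov-plus-Jensen estimate above.
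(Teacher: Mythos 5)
Your proof is correct and is essentially the argument the paper intends: the paper gives no proof of its own here, deferring to \cite{ChlebusMT15} with exactly the intuition you formalize (a twin station is indistinguishable from the case where only one such station is present, so two stations whose consumed random bits agree must receive the same name). Your observation that the twin does not alter the feedback because the beeping channel computes an OR of the stations' actions is precisely the model-specific point the paper flags as the only thing to re-check when porting the shared-memory argument, and the exchangeability-plus-Jensen step is a clean way to convert the high-probability $\cO(n\log n)$ bit bound into the $n^{-\Omega(1)}$ error bound.
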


\begin{proof}
Suppose the algorithm uses at most $c n\lg n$ random bits with  probability~$p_n$ when executed by a channel with $n$ stations, for some constant $c>0$.
Then one of these stations uses at most $c \lg n$ bits with  probability~$p_n$, by the pigeonhole principle.

Consider an execution of the same algorithm for $n+1$ stations.
Let us distinguish a station~$v$.
One of the remaining $n$ stations, say~$w$, uses at most $c \lg n$ bits with the probability~$p_n$. 
Station~$v$ generates the same string of bits with probability~$2^{-c\lg n}= n^{-c}$.
The random bits generated by the station $w$ and $v$ are independent.
Therefore, duplicate names occur with probability at  least~$n^{-c}\cdot p_n$.
When the bound $p_n=1-n^{-\Omega(1)}$ holds, then the probability of duplicate names is at least $n^{-c}(1-n^{-\Omega(1)})= n^{-\Omega(1)}$.
\end{proof}

We conclude this section with a fact about impossibility of developing a Las Vegas naming algorithm when the number of stations~$n$ is unknown.


\begin{proposition} 
\label{pro:no-las-vegas}

{\bf (\cite{ChlebusMT15,KuttenOP00})}
There is no Las Vegas naming algorithm for a channel with beeping with at least two stations such that it does not refer to the number of stations.
\end{proposition}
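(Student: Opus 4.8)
The plan is to argue by contradiction, exploiting the fact that a Las Vegas algorithm must terminate with probability~$1$ along \emph{every} branch of its random-bit tree, in particular along the branch where two stations happen to toss identical coins forever. Concretely, suppose $\cA$ is a Las Vegas naming algorithm that never refers to the number of stations, and run it in two scenarios: one with a single station~$u$, and one with two stations~$u$ and~$w$. In the one-station scenario, since $\cA$ is Las Vegas, it terminates almost surely; so there is a finite round~$T$ and a finite prefix~$\sigma$ of random bits used by~$u$ such that $u$ has halted by round~$T$ after consuming exactly~$\sigma$, and this event has positive probability. (We may take $\sigma$ minimal, i.e.\ the run halts precisely when $\sigma$ is exhausted.) Record the sequence of channel feedbacks (beep/silence) produced by~$u$ on~$\sigma$; call it~$\beta$.

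Now consider the two-station scenario and condition on the positive-probability event that \emph{both} $u$ and $w$ draw $\sigma$ as the prefix of their random-bit strings. First I would establish, by induction on the round number, that in this conditional execution the two stations remain in identical internal states and the channel feedback is exactly~$\beta$: in each round both stations read the same line of the (identifier-free) code, are in the same state, hence either both beep or both pause; the channel returns a beep in the first case and silence in the second, which is precisely what one station alone would have produced (a collision of two identical beeps is indistinguishable from a single beep). This is the same symmetry argument used in the proof of Proposition~\ref{pro:impossibility-deterministic}, now applied along one fixed branch rather than to a deterministic algorithm. Consequently both $u$ and $w$ halt at round~$T$ having consumed exactly~$\sigma$, and each assigns itself the name that the state reached after~$\sigma$ dictates --- the same name for both. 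Hence with positive probability the two-station execution terminates with a duplicate name, i.e.\ with an incorrect output, contradicting the defining property of a Las Vegas algorithm (correct output upon termination). Therefore no such~$\cA$ exists.

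The main obstacle, and the only point requiring care, is the very first step: extracting a finite prefix~$\sigma$ whose continuation is irrelevant. A Las Vegas algorithm halts a.s., but one must rule out the pathology that \emph{every} finite prefix still leaves a positive chance of non-termination (so that no single halting branch can be isolated). This is handled exactly as in the discussion preceding Proposition~\ref{pro:lower-bound-time}: the halting time is a.s.\ finite, so for large enough~$T$ the event ``halted by round~$T$'' has probability arbitrarily close to~$1$; that event is a disjoint union over finitely many prefixes~$\sigma$ of length at most the number of random bits readable in~$T$ rounds, so at least one such prefix~$\sigma$ has positive probability of being both used in full and sufficient for termination. Once~$\sigma$ is in hand, the remainder of the argument is the routine symmetry induction above, and the conclusion follows. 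One should also note the statement assumes at least two stations, which is exactly what makes the duplicate-name collision available; with a single station there is nothing to prove.
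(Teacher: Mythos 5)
The paper itself does not prove this proposition but defers to \cite{ChlebusMT15,KuttenOP00}; your cloning-plus-symmetry argument is exactly the standard route those references take, and the induction showing that two stations with identical bit strings produce merged, indistinguishable beeps is correct. However, there is one genuine gap: your baseline is the \emph{one-station} scenario, and you justify extracting the finite halting prefix $\sigma$ by asserting that $\cA$ terminates almost surely when run by a single station. That is not part of the hypothesis. The proposition only posits that $\cA$ is a Las Vegas naming algorithm for channels \emph{with at least two stations}, and since $\cA$ does not refer to $n$, all you may assume is that the same code is Las Vegas for every $n\ge 2$; nothing forces good behavior at $n=1$. The failure mode is real, not hypothetical: an algorithm that, say, loops calling \textsc{Detect-Collision} until a collision is actually detected terminates almost surely for every $n\ge 2$ but runs forever with a single station, so no finite halting prefix $\sigma$ exists there and your construction never gets off the ground.

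The repair is standard and preserves all of your machinery: take the \emph{two}-station scenario as the baseline. Since $\cA$ is Las Vegas for $n=2$, it halts a.s., so there is a pair of finite prefixes $(\sigma_u,\sigma_w)$, each occurring with positive probability, on which the run terminates with names $1$ and $2$ and channel feedback $\beta$. Now run $\cA$ with four stations $u_1,u_2,w_1,w_2$ and condition on the positive-probability event that $u_1,u_2$ both draw $\sigma_u$ and $w_1,w_2$ both draw $\sigma_w$. Your round-by-round induction applies verbatim to each clone pair: in every round the set of beeping stations is $\{u_1,u_2\}$, $\{w_1,w_2\}$, their union, or empty exactly when $u$, $w$, both, or neither beeped in the two-station run, so the channel feedback is again $\beta$ and each clone tracks the state of its original. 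Both $u_i$ receive one name and both $w_i$ the other, a duplicate assignment upon termination with positive probability in a four-station channel --- contradicting the assumption that $\cA$ is Las Vegas for $n=4$. With this substitution of the baseline (and otherwise your prefix-extraction and symmetry arguments unchanged), the proof is complete.
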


Proposition~\ref{pro:no-las-vegas} was shown by Kutten et al.~\cite{KuttenOP00} for the model of processes communicating by asynchronous processes reading from and writing to shared registers.
This fact depends not that much on asynchrony as on $n$ being unknown, and it was shown by  Chlebus et al.~\cite{ChlebusMT15} to hold for anonymous synchronous shared-memory systems of read-write registers.
The proof given in~\cite{ChlebusMT15}  is general enough to be directly applicable here as well, as both models are synchronous.
Proposition~\ref{pro:no-las-vegas} justifies developing Monte Carlo algorithm for unknown $n$, which we do in Section~\ref{sec:monte-carlo}.

\section{A Las Vegas algorithm}

\label{sec:las-vegas}

We give a Las Vegas naming algorithm for the case when $n$ is known.
The idea is to have stations choose rounds to beep from a segment of integers.
As a convenient probabilistic interpretation, these integers are interpreted as bins, and, after selecting a bin, a ball is placed in the bin.

An execution proceeds by considering the consecutive bins one by one.
First, a bin is verified to be nonempty by hearing that the owners of the balls in the bin beep.
When no beep is heard then the next bin is considered, otherwise the nonempty bin is verified for collisions.
Such a verification is performed by $\cO(\log n)$ consecutive calls of procedure \textsc{Detect-Collision} given in Section~\ref{sec:preliminaries}.
When a collision is not detected then the stations that placed their balls in this bin assign themselves the next available name, otherwise the stations whose balls are in this bin place their balls in a new set of bins.
When each station has a name assigned, we verify if the maximum assigned name is~$n$.
If this is the case then the algorithm terminates, otherwise we repeat this procedure again.

The algorithm is called \textsc{Beep-Naming-LV}, its pseudocode is in Figure~\ref{alg:las-vegas}.
This pseudocode is to be understood that natural inter-processor synchronization occurs, in that when a group of processors participate in executing an instruction while others do not, then those that are not involved wait for the duration of the action by the participants to be completed.
In particular, this applies to the for-loop in Figure~\ref{alg:las-vegas} that is controlled by variable~$i$.
If a beep is heard then this means that some processors will execute the inner for-loop to iterate $\beta\lg n$ times procedure \textsc{Detect-Collision} while the other stations pause for the same amount of time and resume when the for-loop has been completed.
Otherwise, when a beep is not heard, then all the processors increment $i$ and execute the next iteration of the for-loop.


\begin{figure}[t]
\rule{\textwidth}{0.75pt}

\F 
\textbf{Algorithm} \textsc{Beep-Naming-LV}

\rule{\textwidth}{0.75pt}
\begin{center}
\begin{minipage}{\pagewidth}
\begin{description}
\item[\tt repeat]\ 
\begin{description}
\item[\texttt{counter} $\gets 0$] ;  \texttt{range} $\gets n\lg n$ ;  \texttt{name}$_v \gets$ \texttt{null}
\item[\tt repeat]\ 
\begin{description}
\item[\texttt{slot}$_v$] $\gets$\, random number in the interval $[\texttt{left},\texttt{right}]$ 
\item[\tt for] $i\gets$ \texttt{$1$ to range do}
\begin{description}
\item[\tt if] $i = \texttt{slot}_v$ \texttt{then}
\begin{description}
\item[\tt beep]
\item[\tt if] a beep was just heard \texttt{then}
\begin{description}
\item[\tt collision] $\gets$ \texttt{false}
\item[\tt for] $j\gets 1$ \texttt{to} $\beta \lg n$ \texttt{do}

\item[\ \ \ ] \texttt{if} \textsc{Detect-Collision} \ \texttt{then} \ \texttt{collision} $\gets$ \texttt{true}

\item[\tt if not collision then]\   

\item[\ \ \ ]  \texttt{counter} $\gets \texttt{counter} + 1$

\item[\ \ \ ]  \texttt{name}$_v$ $\gets \texttt{counter}$

\end{description}
\end{description}
\end{description}
\item[\tt if] $\texttt{name}_v = \texttt{null then beep}$ 
\hfill
/$\ast$   station $v$ participated in a collision $\ast$/   
\item[\tt if] a beep was just heard \texttt{then}
\begin{description}
\item[\texttt{range}] $\gets (n-\texttt{counter})\lg n$
\end{description}
\end{description}
\item[\tt until] no beep was heard in the previous round
\end{description}
\item[\tt until] \texttt{counter} $= n$
\end{description}
\end{minipage}
\FFF

\rule{\textwidth}{0.75pt}

\parbox{\captionwidth}{\caption{\label{alg:las-vegas}
The pseudocode for a station~$v$.
The number of stations~$n$ is known.
The constant~$\beta>1$ is a parameter determined in the analysis.
Procedure \textsc{Detect-Collision} has its pseudocode in Figure~\ref{fig:procedure-collision}.
The variable \texttt{name} is to store the assigned identifier.
}}
\end{center}
\end{figure}

Algorithm \textsc{Beep-Naming-LV} is analyzed by modeling its executions by a process of throwing balls into bins, which we call the \emph{ball process}.
The process proceeds through stages.
There are $n$ balls in the first stage.
When a stage begins and there are some $i$ balls eligible for the stage then the number of used bins is~$i\lg n$.
Each ball is thrown into a randomly selected bin.
Next, balls that are single in their bins are removed and the remaining balls that participated in collisions advance to the next stage.
The process terminates when no  eligible balls remain.

\begin{lemma}
\label{lem:balls-into-bins-times-log}

The number of times a ball is thrown into a bin during an execution of the ball process that starts  with $n$ balls is at most $3n$ with probability at least $1-e^{-n/4}$.
\end{lemma}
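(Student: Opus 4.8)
The plan is to let $T$ denote the total number of throws made over the entire run of the ball process, i.e.\ $T=\sum_{s\ge 1}Z_s$ where $Z_s$ is the number of balls entering stage~$s$ (so $Z_1=n$ and $Z_{s+1}$ equals the number of balls that collide in stage~$s$), and to prove $\Pr[T>3n]\le e^{-n/4}$. I would expose the throws one at a time in a fixed order --- all throws of stage~$1$, then all throws of stage~$2$, and so on, using some fixed order of the balls within each stage --- and show that each single throw ends up in a shared bin, and hence causes its ball to be re-thrown, with conditional probability at most $1/\lg n$, whatever the outcomes of the earlier throws were. This makes $T$ comparable to a binomial random variable, after which a Chernoff estimate finishes the job.

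For the details, let throw $i$ be the $i$-th throw in this order, let $X_i=1$ if that throw lands in a non-singleton bin and $X_i=0$ otherwise, and put $X_i=0$ once the process has terminated. The key step is the bound $\Pr[X_i=1\mid X_1,\dots,X_{i-1}]\le 1/\lg n$. Conditioning on $X_1,\dots,X_{i-1}$ already fixes which stage throw~$i$ lies in and how many balls, say $j$, that stage contains: stage~$1$ has $n$ throws, stage~$2$ has $\sum_{\ell=1}^{n}X_\ell$ throws, and so on, so the prefix of the $X_\ell$'s determines all the stage sizes and also whether the process is already over. If $j\le 1$ the throw cannot produce a collision; if $j\ge 2$, condition additionally on the positions of the other $j-1$ balls of this stage, and observe that our ball falls into one of their at most $j-1$ occupied bins, out of $j\lg n$ bins in all, with probability at most $(j-1)/(j\lg n)\le 1/\lg n$ by a union bound that does not depend on where those other balls sit; averaging over the extra conditioning gives the claim.

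Next I would relate $T$ to $\sum_{i=1}^{3n}X_i$. A ball is removed precisely when it lands alone, that is on a throw with $X_i=0$. Suppose $T>3n$; then the first $3n$ throws are all genuine throws, and throw number $3n+1$ re-throws some ball that has not yet been removed, so strictly fewer than $n$ of the $n$ original balls have been removed during the first $3n$ throws, i.e.\ $\sum_{i=1}^{3n}(1-X_i)<n$, equivalently $\sum_{i=1}^{3n}X_i>2n$. Hence $\{T>3n\}\subseteq\{\sum_{i=1}^{3n}X_i>2n\}$ and $\Pr[T>3n]\le\Pr\bigl[\sum_{i=1}^{3n}X_i\ge 2n\bigr]$. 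The standard exponential-moment computation applies to such conditionally dominated indicators: for $\lambda>0$ we have $\mE[e^{\lambda X_i}\mid X_1,\dots,X_{i-1}]=1+(e^\lambda-1)\Pr[X_i=1\mid X_1,\dots,X_{i-1}]\le e^{(e^\lambda-1)/\lg n}$, so $\mE\bigl[e^{\lambda\sum_{i=1}^{3n}X_i}\bigr]\le e^{3n(e^\lambda-1)/\lg n}$, and Markov's inequality with the optimal choice $\lambda=\ln\frac{2\lg n}{3}$ yields $\Pr\bigl[\sum_{i=1}^{3n}X_i\ge 2n\bigr]\le\bigl(\frac{3e}{2\lg n}\bigr)^{2n}$. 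This is below $e^{-n/4}$ once $n$ exceeds a small absolute constant, and for the finitely many smaller values of $n$ one checks the bound directly, since there $e^{-n/4}$ is a constant bounded away from~$0$.

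The one place that needs care is the key conditional bound: one must make sure that having already revealed some of the throws of the current stage does not invalidate the $1/\lg n$ estimate, and the safe way is to condition on the full bin positions of all the other balls of that stage and then invoke a union bound that is oblivious to those positions. Once that is in hand, the reduction to a binomial tail and the Chernoff estimate are entirely routine.
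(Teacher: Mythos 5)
There is a genuine gap at the step you yourself flag as the key one. The bound $\Pr[X_i=1\mid X_1,\dots,X_{i-1}]\le 1/\lg n$ is false for your choice of $X_i$ (the indicator that throw $i$ ends up in a non-singleton bin), because collision indicators of balls in the \emph{same} stage are strongly positively correlated. Concretely, take a stage containing $j=2$ balls thrown into $2\lg n$ bins and expose them in order: $X_1=1$ if and only if the two balls share a bin, and likewise $X_2$, so $\Pr[X_2=1\mid X_1=1]=1$, not $\le 1/\lg n$. Your proposed justification hides the error in the "condition additionally on the positions of the other $j-1$ balls and average" step: once you condition on the earlier indicators of the same stage, which are functions of ball $i$'s own position as well, the conditional distribution of ball $i$'s bin is no longer uniform, so the union bound $(j-1)/(j\lg n)$ does not apply and the averaging argument is invalid. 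Since the exponential-moment chain $\mE[e^{\lambda X_i}\mid X_1,\dots,X_{i-1}]\le e^{(e^\lambda-1)/\lg n}$ rests exactly on this conditional bound, the Chernoff estimate as written does not follow; the rest of your argument (the deterministic implication $T>3n\Rightarrow\sum_{i\le 3n}X_i>2n$ and the optimization over $\lambda$) is fine.

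The flaw is repairable without changing the spirit of your proof. Redefine $X_i$ as the indicator that throw $i$ lands in a bin already occupied by a \emph{previously exposed} ball of the same stage; for this event the conditional bound genuinely holds, since given the past the current ball is uniform over $j\lg n$ bins of which at most $j-1$ are occupied by earlier-exposed balls. The price is in the accounting: a ball that was alone when exposed can be joined later, so the number of balls advancing from a stage is at most $2\sum X_i$ over that stage rather than $\sum X_i$; with this factor of $2$ the implication becomes $T>3n\Rightarrow\sum_{i\le 3n}X_i\ge n/2$ (more than $2n$ balls must advance in total, hence more than $n$ land-on-occupied events), and the same moment-generating-function computation still gives a bound well below $e^{-n/4}$ for large $n$. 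For comparison, the paper argues more bluntly: it treats each throw as an independent Bernoulli trial with success (singleton) probability at least $1-1/\lg n$ and applies a multiplicative Chernoff bound to $m=3n$ trials; your attempt to make the domination precise via a filtration is a reasonable instinct, but it must be set up so that the conditioning never includes same-stage information that depends on the ball being thrown.
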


\begin{proof}
In each stage, we throw some $k$ balls into at least $k\lg n$ bins.
The probability that a given ball ends up single in a bin is at least
\[
1-\frac{k}{k\lg n} = 1-\frac{1}{\lg n}
\ ,
\]
which we denote as~$p$.
A ball is thrown repeatedly in consecutive iterations until it lands single in a bin.
Our immediate concern is the number of trials to have all balls end up as singletons in their bins.

Suppose that we perform some $m$ independent Bernoulli trials, each with the probability~$p$ of success, and let $X$ be the number of successes.
Here a success means that a new ball ends up single in its bin.
We show next that taking $m=\Theta(n)$ suffices  to have the inequality $X\ge n$ hold with a large probability.

The expected number of successes is $\mE[X]=\mu=pm$.
We use the Chernoff bound in the form
\begin{equation}
\label{eqn:chernoff}
\Pr(X< (1-\varepsilon)\mu) < e^{-\varepsilon^2\mu/2}
\ ,
\end{equation}
for any $0<\varepsilon<1$; see~\cite{MotwaniR95}.
We want the inequality $(1-\varepsilon)\mu \ge n$ to hold.
Let us set $\varepsilon = \frac{1}{2}$, so that $\mu\ge 2n$ suffices.
This means $pm\ge 2n$, which is extended into the following form:
\begin{equation}
\label{eqn:post-chernoff}
 \Bigl(1-\frac{1}{\lg n}\Bigr)\cdot m\ge 2n
 \ .
\end{equation}
If we choose $m=3n$ then the inequality~\eqref{eqn:post-chernoff} holds for sufficiently large $n$.

The probability that this inequality~\eqref{eqn:post-chernoff} does not hold is estimated from above by~\eqref{eqn:chernoff}.
Here we have that $\mu\ge 2n$ so the right-hand side of~\eqref{eqn:chernoff} is~$e^{-n/4}$.
\end{proof}

We proceed to Theorem~\ref{thm:las-vegas}, which summarizes the desired properties of algorithm \textsc{Beep-Naming-LV}, in particular that it is Las Vegas.
In the proof, we model an execution of the algorithm as the ball process that starts with $n$ balls. 
The main difference between the ball process and the algorithm is that collisions of balls in bins in the ball process are detected with certainty, by the specification of the process, while collisions between tentative names in the algorithm's execution might not be detected with some positive probability.

\begin{theorem}
\label{thm:las-vegas}

Algorithm \textsc{Beep-Naming-LV}, for each $\beta>0$, terminates almost surely and there is no error when it terminates.
For each $a>0$, there exists $\beta>1$ and $c>0$ such that  the algorithm assigns unique names, works in time at most~$c n\lg n$, and  uses at most $c n\lg n$ random bits, all these properties holding with the probability that is at least $1-n^{-a}$, for sufficiently large~$n$.
\end{theorem}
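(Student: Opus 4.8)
The plan is to analyze algorithm \textsc{Beep-Naming-LV} by coupling a single iteration of the outer \texttt{repeat} loop with one run of the ball process from Lemma~\ref{lem:balls-into-bins-times-log}, and then to argue that (a) one iteration succeeds with high probability, (b) on every iteration the algorithm terminates in the claimed time and random-bit budget with high probability, and (c) the algorithm is always correct and almost surely terminating regardless of $\beta$. The ``always correct'' claim is the easy part: the only way a station receives a name is when a call to \textsc{Detect-Collision} returns \texttt{false} for all $\beta\lg n$ repetitions in its bin; if that bin contained two or more stations, they all get the \emph{same} \texttt{counter} value, so names stay distinct among stations that are alone in their bins, and duplicates never arise — at worst several stations share a name, but then \texttt{counter}${}<n$ and the outer loop repeats. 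The outer loop exits only when \texttt{counter}${}=n$, which (since \texttt{counter} increments once per ``confirmed singleton'' bin) forces exactly $n$ names $1,\dots,n$ to have been handed out to $n$ distinct stations. Hence no error upon termination.

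For almost-sure termination with arbitrary $\beta>0$: each iteration of the outer loop independently has some probability $q>0$ of ending with \texttt{counter}${}=n$ (for instance, the event that every station picks a distinct slot and all the \textsc{Detect-Collision} calls behave correctly has positive probability), and the inner loop always terminates after finitely many stages because each stage strictly reduces the number of uncommitted stations (a collision bin with $i\ge 2$ balls either is correctly flagged, in which case those $i$ balls recurse into a strictly smaller instance, or is mis-flagged as a singleton — but then \texttt{counter}${}=n$ is not reached and we restart). So the number of outer iterations is stochastically dominated by a geometric random variable, giving almost-sure termination.

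The quantitative part is where the real work is. Fix $a>0$. First I would bound the failure probability of a \emph{single} iteration: by Lemma~\ref{lem:detect-collision}, a bin occupied by $k\ge2$ stations survives all $\beta\lg n$ independent \textsc{Detect-Collision} calls — i.e.\ a false singleton — with probability $2^{-(k-1)\beta\lg n}\le 2^{-\beta\lg n}=n^{-\beta}$ (here I must be slightly careful that the lemma's bound $2^{-k+m}$ for $m$ calls by $k$ stations is applied with $k\ge 2m$, which is what $\beta\lg n$ repeated two-outcome trials give; this needs a one-line check). Taking a union bound over all bins ever inspected in one iteration — at most $3n$ of them by Lemma~\ref{lem:balls-into-bins-times-log}, on the high-probability event $\cE$ that the ball process throws at most $3n$ times — the probability that \emph{some} collision is missed in one iteration is at most $3n\cdot n^{-\beta}=3n^{1-\beta}$. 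On $\cE$ and on the complementary good event, the iteration ends with \texttt{counter}${}=n$ and terminates; the total number of rounds is dominated by $\sum(\text{bins scanned}) + (\text{collision-verification rounds})$, which is $O(n\lg n) + O(n\lg n)\cdot\beta\lg n$ — wait, that is $O(n\log^2 n)$, so I need to be more careful: the $3n$ bound is on the number of \emph{throws}, i.e.\ confirmed-or-unconfirmed singleton events, and each throw triggers at most one block of $\beta\lg n$ \textsc{Detect-Collision} calls only when a beep is heard, i.e.\ for nonempty bins; the number of \emph{nonempty} bins across all stages is at most the number of throws, $\le 3n$, so the verification cost is $O(n\beta\lg n)$, while the scanning cost (the \texttt{for} $i\gets$\texttt{left to right} loop) is $\sum_{\text{stages}} (\text{range length})$; since stage with $i$ balls uses range of length $\le i\lg n$ and the ball counts across stages sum to at most $3n$ by Lemma~\ref{lem:balls-into-bins-times-log}, this is $O(n\lg n)$ as well. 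So one iteration runs in $O(\beta n\lg n)$ rounds and uses $O(\beta n\lg n)$ random bits (one per slot choice per throw: $\le 3n\lg(n\lg n)=O(n\lg n)$ bits for slots, plus one per \textsc{Detect-Collision} call: $O(\beta n\lg n)$) on the event $\cE$, whose failure probability is $e^{-n/4}$. Choosing $\beta>1+a$ makes the single-iteration error $3n^{1-\beta}=O(n^{-a})$; adding $e^{-n/4}$ keeps the bound $O(n^{-a})$, and absorbing constants into $c$ (depending on $a$ through $\beta$) gives the theorem. The main obstacle, and the step I would be most careful writing up, is precisely this bookkeeping: separating ``number of throws'' from ``number of rounds,'' making sure the $3n$ bound from Lemma~\ref{lem:balls-into-bins-times-log} is invoked to control \emph{both} the scanning cost and the number of verification blocks simultaneously, and confirming that one bad iteration is absorbed into the $n^{-a}$ budget rather than needing a geometric-tail argument over many iterations (which it is, since with probability $1-O(n^{-a})$ the very first iteration already succeeds and terminates).
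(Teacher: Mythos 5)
Your proposal is correct and follows essentially the same route as the paper's proof: correctness upon termination via the \texttt{counter}${}=n$ exit test, the coupling with the ball process and Lemma~\ref{lem:balls-into-bins-times-log}'s bound of $3n$ throws, the $n^{-\beta}$ per-bin detection-failure bound derived from Lemma~\ref{lem:detect-collision}, a union bound over at most $3n$ occupied bins, and the same time and random-bit accounting showing the first iteration already succeeds within budget with probability $1-n^{-a}$. The only slip is in your almost-sure-termination aside: a stage of the inner loop need \emph{not} strictly reduce the number of unnamed stations (all remaining balls may collide again), so the inner loop terminates only almost surely rather than surely --- which is exactly what the paper claims and is all that is needed.
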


\begin{proof}
Consider an iteration of the main repeat-loop.
An error can occur in this iteration only when there is a collision that is not detected by procedure \textsc{Detect-Collision} in none of its $\beta \lg n$ calls.
Such an error results in duplicate names, so that the number of assigned different names is smaller than~$n$.
The maximum name assigned in an iteration is the value of the variable \texttt{counter}, which has the same value at each station.
The algorithm terminates by having an iteration that produces \texttt{counter} $= n$,  but then there are no repetitions among the names, and so there is no error.

Next, we show that termination is a sure event.
Consider an iteration of the main repeat-loop.
There are $n$ balls and each of them is being thrown repeatedly until either it is not involved in a collision or there is a collision but it is not detected. 
Eventually each ball is left to reside in its bin with probability~$1$.
This means that each iteration ends almost surely.

The following are notations used in analyzing iterations of the main repeat-loop.
Denote by $A$ the event that there is a collision that passes undetected.
The iteration fails to assign proper names if and only if the event~$A$ holds.
Let $B$ be the event that  the total number of throws of balls into bins is at most~$3n$.
Denote by $\neg E$ the complements of an event~$E$. 
We have that $\Pr(\neg B)\le e^{-n/4}$, by Lemma~\ref{lem:balls-into-bins-times-log}.

When a ball lands in a bin then it is verified for a collision $\beta\lg n$ times.
If there is a collision then it passes undetected with a probability that is at most~$n^{-\beta}$.
This is because one call of procedure \textsc{Detect-Collision} detects a collision with a probability that is at least~$\frac{1}{2}$,  by Lemma~\ref{lem:detect-collision}, in which $m=1$ and $k\ge 2$. 

We estimate the probability of the event that an iteration fails to assign proper names, which is the same as of the event~$A$.
This is accomplished as follows:
\begin{eqnarray}
\nonumber
\Pr(A) &= & \Pr(A\cap B)+ \Pr(A\cap \neg B)\\
\nonumber
& = & \Pr(A\mid B) \cdot \Pr(B) + \Pr(A\mid \neg B) \cdot \Pr(\neg B)\\
\nonumber
&\le & \Pr(A\mid B) + \Pr(\neg B)\\
\label{eqn:poly-and-exp}
&\le & 3n\cdot n^{-\beta} + e^{-n/4}
\ ,
\end{eqnarray}
where we used the union bound to obtain the last line~\eqref{eqn:poly-and-exp}.
It follows that at least $i$ iterations are needed with a probability that is at most $(e^{-n/4}+3 n^{1-\beta})^i$, which converges to~$0$ as $i$ grows unbounded, assuming only that $\beta>1$ and $n$ is sufficiently large.

Let us consider modeling an iteration of the main repeat loop.
The event $\neg A\cap B$ occurs when balls are thrown at most $3n$ times and all collisions are detected. 
The probability that the event $\neg A\cap B$ holds can be estimated from below as follows:
\begin{eqnarray}
\nonumber
\Pr(\neg A\cap B)  &=& \Pr(\neg A\mid B) \cdot \Pr(B) \\
\nonumber
&\ge & (1- 3n^{1-\beta}) \cdot (1-e^{-n/4}) \\
\label{eqn:two-products}
&\ge & 1 - 3n^{1-\beta} - e^{-n/4} (1 - 3n^{1-\beta})
\ .
\end{eqnarray}
Bound~\eqref{eqn:two-products} is at least $1-n^{-a}$ for sufficiently large $\beta>1$,  when also $n$ is large enough.

Bound~\eqref{eqn:two-products} holds for the first iteration of the main repeat loop.
So, with a probability that is at least $1-n^{-a}$, the first iteration assigns proper names with at most $3n$ balls thrown in total.
Let us assume that this event occurs.
Then the whole execution takes time at most $c n\lg n$, for a suitably large~$c>0$.
This is because procedure \textsc{Detect-Collision} is executed at most~$3\beta n\lg n$ times, and each of its calls takes two rounds.
One assignment of a value to the variable \texttt{slot} requires $\lg (n\lg n)<2\lg n$ bits, for sufficiently large~$n$.
There are at most $3n$ such assignments, for a total of at most $c n\lg n$ random bits, for a suitably large~$c>0$.
\end{proof}

The optimality of algorithm \textsc{Beep-Naming-LV} follows from the propositions given in Section~\ref{sec:lower-bounds}.
The algorithm runs in the optimal expected time $\cO(n\log n)$, by Proposition~\ref{pro:lower-bound-time}, and it uses the optimum expected number of random bits $\cO(n\log n)$, by Proposition~\ref{pro:lower-bound-on-random-bits}.

\section{A Monte Carlo algorithm}

\label{sec:monte-carlo}

We give a randomized naming algorithm for the case when $n$ is unknown.
In view of Proposition~\ref{pro:no-las-vegas}, no Las Vegas algorithm exists in this case, so we develop a Monte Carlo one.

The algorithm's execution can be interpreted as repeatedly throwing balls into bins and verifying for collisions.
A bin is determined by a string of some $k$ bits.
Each station chooses one such a string randomly.
The algorithm proceeds by repeatedly identifying the string that is smallest lexicographically among those that have not been considered yet.
This is accomplished by procedure \textsc{Next-String} which operates as a concurrent search coordinated by beeps.
Having identified a nonempty bin, all the stations that placed their balls into this bin verify if there is a collision in this bin by calling \textsc{Detect-Collision} a suitably large number of times.
In the case when no collision has been detected, the stations whose balls are in the bin assign themselves the consecutive available name as a temporary one.
This continues until all the balls have been considered.
If no collision has ever been detected in the current stage, then the algorithm terminates and the temporary names are considered as the final assigned names, otherwise the algorithm  proceeds to the next stage. 

Procedure \textsc{Next-String} operates as a radix search.
The goal is to identify the smallest string of bits by examining consecutive bit positions.
The procedure uses two variables \texttt{my-string} and $k$, where $k$ is the length of the  considered bit strings and \texttt{my-string}$_v$ is the string of $k$ bits generated by station~$v$.
The procedure begins by setting each of the $k$ bit positions in variable \texttt{string} to~$1$.
Then these bit positions are considered one by one.
For a given bit position $i$, where $1\le i\le k$, all the stations $v$ that can still have the smallest string and whose bit on position~$i$ in \texttt{my-string}$_v$ is~$0$ do  beep.
This determines the first $i$ bits of the smallest string, because if a beep is heard then the $i$th bit of the smallest string is~$0$ and otherwise it is~$1$.
This is recorded by setting the $i$th bit position in the variable \texttt{string} to the determined bit.
The stations eligible for beeping, if their $i$th bit is~$0$, are those whose strings agree on the first $i-1$ positions with the smallest string.
After all~$k$ bit positions have been considered, the variable \texttt{string} is returned. 

Procedure \textsc{Next-String} has its pseudocode in Figure~\ref{fig:procedure-next-string}.
Its relevant property is summarized in Lemma~\ref{lem:next-string}.


\begin{figure}[t]
\rule{\textwidth}{0.75pt}

\F 
\textbf{Procedure} \textsc{Next-String}

\rule{\textwidth}{0.75pt}
\begin{center}
\begin{minipage}{\pagewidth}
\begin{description}
\item[\tt string] $\gets$  a string of $k$ bit positions, with each of them set to $1$
\item[\tt for] $i\gets 1$ \texttt{to} $k$ \texttt{do}
\begin{description}
\item[\tt if]    (\texttt{my-string}$_v$ matches \texttt{string} on the first $i-1$ bit positions)
\item[\tt \ \ \ \ \ \ \ \ \ \ \ \ \ \ \ \ and] (the $i$th bit of \texttt{my-string}$_v$ is~$0$) 
\begin{description}
\item[\texttt{then}] \texttt{beep}
\end{description}
\item[\texttt{if}] a beep was heard in the previous round \texttt{then}
\begin{description}
\item[\rm set] the $i$th bit of \texttt{string} to~$0$ 
\end{description} 
\end{description}
\item[\tt return] (\texttt{string})
\end{description}
\end{minipage}
\FFF

\rule{\textwidth}{0.75pt}

\parbox{\captionwidth}{\caption{\label{fig:procedure-next-string}
The pseudocode for a station $v$.
This procedure is used by algorithm \textsc{Beep-Naming-MC}.
The variables \texttt{my-string} and $k$ are the same as those in the pseudocode in Figure~\ref{alg:monte-carlo}.
}}
\end{center}
\end{figure}


\begin{lemma}
\label{lem:next-string}

Procedure \textsc{Next-String} returns the smallest lexicographically string  among the non-null string values of the private copies of the variable \texttt{my-string}.
\end{lemma}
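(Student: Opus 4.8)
The plan is to argue by induction on the bit position being processed, maintaining an invariant that ties the common variable \texttt{string} to a prefix of the target string and the set of still-active stations to those stations whose private copy \texttt{my-string} shares that prefix. Let $S$ denote the set of stations $v$ for which \texttt{my-string}$_v$ is non-null; I assume $S\neq\emptyset$, since the procedure is only invoked in this case (when $S=\emptyset$ the statement is vacuous). Let $\sigma$ be the lexicographically smallest string in $\{\texttt{my-string}_v : v\in S\}$; there may be several stations owning $\sigma$, which causes no trouble.

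First I would formalize the invariant: for each $i\in\{0,1,\ldots,k\}$, after the first $i$ bit positions have been handled by the main loop, (a) the first $i$ bits of \texttt{string} equal the first $i$ bits of $\sigma$, and (b) the set $E_i$ of stations still eligible to beep is exactly $\{v\in S : \texttt{my-string}_v$ agrees with $\sigma$ on positions $1,\ldots,i\}$, which is non-empty because it contains every owner of $\sigma$. The base case $i=0$ is immediate: no position has been processed, the length-$0$ prefixes trivially agree, and $E_0=S\neq\emptyset$; the initialization of \texttt{string} to all $1$'s is irrelevant here since the invariant only constrains already-processed positions.

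For the inductive step, assume the invariant holds for $i-1$ and consider position $i$. By the pseudocode, exactly the stations in $E_{i-1}$ whose $i$th bit is $0$ beep in this round. If a beep is heard, some $v\in E_{i-1}$ has $i$th bit $0$; since every string in $E_{i-1}$ shares its first $i-1$ bits with $\sigma$, the lexicographically smallest of them, namely $\sigma$, must have $i$th bit $0$. The code then sets the $i$th bit of \texttt{string} to $0$, giving (a), and restricts eligibility to members of $E_{i-1}$ whose $i$th bit is $0$, which is precisely $E_i$, giving (b); $E_i$ still contains an owner of $\sigma$. If no beep is heard, then every $v\in E_{i-1}$ has $i$th bit $1$, so in particular $\sigma$ has $i$th bit $1$; the code sets the $i$th bit of \texttt{string} to $1$ and leaves eligibility unchanged, and since all of $E_{i-1}$ has $i$th bit $1$ we get $E_i=E_{i-1}$, so (a) and (b) hold.

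Finally, applying the invariant with $i=k$ shows that \texttt{string} and $\sigma$ agree on all $k$ positions, i.e. \texttt{string}$=\sigma$, and the procedure returns \texttt{string}, which proves the lemma. I expect the only point requiring genuine care to be part (b): confirming that the notion of "still eligible to beep" maintained by the pseudocode coincides exactly with prefix agreement with $\sigma$ — that is, that the stations which are made to refrain from beeping at later rounds are precisely those that have already been ruled out as potential owners of the smallest string. This is bookkeeping rather than a conceptual obstacle, and it becomes routine once the invariant above is stated explicitly.
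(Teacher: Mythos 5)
Your proof is correct and follows essentially the same route as the paper: induction on the bit position with the invariant that the processed prefix of \texttt{string} equals the corresponding prefix of the smallest string. Your explicit part (b), tracking the eligible set $E_i$, is just a more formal rendering of what the paper's proof handles implicitly (eligibility is defined by matching \texttt{string}, which by part (a) coincides with matching $\sigma$), so the two arguments are the same in substance.
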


\begin{proof}
The string that is output is obtained by processing all the input strings \texttt{my-string} through consecutive bit positions.
We show the invariant that after $i$ bits have been considered, for $0\le i\le k$, then the bits on these positions make the prefix of the first $i$ bits of the smallest string. 

The invariant is shown by induction on~$i$.
When $i=1$ then the bits on previously considered positions make an empty string, as no positions have been considered yet, and the empty string is a prefix of the smallest string.
Suppose that the invariant holds for all $i$ such that $0\le i<k$, and consider the stations whose variable \texttt{my-string} has the same bits on these first $i$ positions as variable \texttt{string}.
This set includes the station~$v$ with the smallest \texttt{my-string} by the inductive hypothesis.
If the bit on the $(i+1)$st position of \texttt{my-string}$_v$ is~$0$ then $v$ beeps and \texttt{string} has its bit on position $i+1$ set to~$0$.
Otherwise there is no station with $0$ on the $(i+1)$st position of \texttt{my-string}$_v$, because \texttt{my-string}$_v$ is smallest.
Then there is no beep and $1$ at position $i+1$ in \texttt{string} is not modified.
This completes the proof of the invariant.

The procedure \textsc{Next-String} terminates after $k$ bit positions have been processed.
The proved invariant, applied for $i=k$, means that the final value of the variable \texttt{string} and the smallest \texttt{my-string}$_v$, for all the nodes~$v$, are identical.
\end{proof}

The naming algorithm we develop next is called \textsc{Beep-Naming-MC}.
Its pseudocode is in Figure~\ref{alg:monte-carlo}.


\begin{figure}[t]
\rule{\textwidth}{0.75pt}

\F 
\textbf{Algorithm} \textsc{Beep-Naming-MC}

\rule{\textwidth}{0.75pt}
\begin{center}
\begin{minipage}{\pagewidth}
\begin{description}
\item[$k\gets 1$]\ 
\item[\tt repeat]\ 
\begin{description}
\item[$k \gets 2k$] 
\item[\texttt{collision}] $\gets$ \texttt{false}
\item[\texttt{counter}] $\gets 0$ 
\item[\tt my-string$_v$]$\gets$ a random string of $k$ bits
\item[\tt repeat]\ 
\begin{description}
\item[\tt if]  \texttt{my-string$_v$} $\ne$ \texttt{null then smallest-string} $\gets$ \textsc{Next-String} 
\item[\tt if] \texttt{my-string$_v$} $= \texttt{smallest-string}$ \texttt{then}
\begin{description}

\item[\tt for] $i\gets 1$ \texttt{to} $\beta\cdot k$ \texttt{do}
\item[\ \ \ ] \texttt{if} \textsc{Detect-Collision} \ \texttt{then} \ \texttt{collision} $\gets$ \texttt{true}

\item[\tt if not collision then] \ 
\begin{description}
\item[\tt counter] $\gets \texttt{counter} + 1$
 \item[\tt name$_v$] $\gets \texttt{counter}$

\end{description}
  \item[\tt my-string$_v$] $\gets \texttt{null}$
\end{description}
\item[\tt if] $\texttt{my-string}_v \ne \texttt{null}$ \texttt{then beep}
\end{description}
\item[\tt until] no beep was heard in the previous round
\end{description}
\item[\tt until] \texttt{not collision}
\end{description}
\end{minipage}
\FFF

\rule{\textwidth}{0.75pt}

\parbox{\captionwidth}{\caption{\label{alg:monte-carlo}
The pseudocode for a station~$v$.
Constant $\beta>0$ is an integer parameter determined in the analysis.
The pseudocode of procedure \textsc{Detect-Collision} is given in Figure~\ref{fig:procedure-collision} and the pseudocode of procedure \textsc{Next-String} is in Figure~\ref{fig:procedure-next-string}.
The variable \texttt{name} is to store the assigned identifier.
}}
\end{center}
\end{figure}

The algorithm proceeds through stages, where a stage is implemented by an iteration of the main repeat-loop in the pseudocode.
The number of bins in the $i$th stage is $2^k$ where $k=2^i$.
The variable~$k$ is doubled in the beginning of each iteration of the main loop.
During a stage, first the next bin with a ball is identified by calling procedure \textsc{Next-String}.
Next, this bin is verified for collisions by calling procedure \textsc{Detect-Collision} $\beta k$ times, for a constant $\beta>0$, which is a parameter to be settled in analysis.
During such a verification, only the stations whose balls are in this bin do participate, while the remaining stations pause for the same amount of time.
This is possible to achieve because the number of times procedure \textsc{Detect-Collision} is called is $\beta\cdot k$, and each call takes two rounds.

Theorem~\ref{thm:monte-carlo} summarizes the properties of algorithm  \textsc{Beep-Naming-MC}.
In particular, that it is a Monte Carlo algorithm with a suitably small probability of error.

\begin{theorem}
\label{thm:monte-carlo}

Algorithm \textsc{Beep-Naming-MC}, for each $\beta>0$, terminates almost surely.
For each $a>0$, there exists $\beta>0$ and $c>0$ such that  the algorithm assigns unique names, works in time at most~$c n\lg n$, and  uses at most $c n\lg n$ random bits, all these properties holding with a probability that is at least $1-n^{-a}$.
\end{theorem}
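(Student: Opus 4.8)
The plan is to model an execution by a ball process: in stage~$i$ all $n$ balls (one per station) are thrown independently and uniformly into $m_i:=2^{2^i}$ bins, the balls that are singleton in their bin receive the next consecutive names, and if every nonempty bin is certified collision-free the execution halts, while otherwise all balls are re-thrown in stage $i+1$, whose bin count is $m_{i+1}=m_i^2$. Certification of a bin uses $\beta k_i$ independent calls to \textsc{Detect-Collision} with $k_i=2^i$; by Lemma~\ref{lem:detect-collision} (with $m=1$ and $k=s$) a single call on a bin holding $s\ge 2$ balls fails to detect the collision with probability $2^{-(s-1)}$, so such a bin survives all $\beta k_i$ calls with probability $2^{-(s-1)\beta k_i}$, and distinct bins use independent coins. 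Two preliminary facts I would record. First, every stage halts deterministically: by Lemma~\ref{lem:next-string}, \textsc{Next-String} hands out the nonempty bins one at a time in lexicographic order and each inner iteration removes the balls of one such bin, so a stage has at most $n+1$ inner iterations, each lasting $\cO(\beta k_i)$ rounds. Second, a collision can be \emph{detected} only in a bin that genuinely holds $\ge 2$ balls, so the algorithm enters stage $i+1$ only when stage~$i$ had a genuine collision, an event of probability at most $\binom n2/m_i$, which tends to~$0$; hence $\Pr(\text{reach stage }i)\to 0$, so the algorithm terminates almost surely for every $\beta>0$.

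For the quantitative part, fix $a>0$, let $i^\ast$ be the least stage with $m_{i^\ast}\ge 2n^{a+3}$, and note $k_{i^\ast}=2^{i^\ast}=\cO((a+3)\lg n)$ and $\sum_{j\le i^\ast}k_j<2k_{i^\ast}$. The target is to show that with probability at least $1-n^{-a}$ the algorithm terminates \emph{correctly} no later than stage~$i^\ast$. This yields all three conclusions at once: on that event every nonempty bin of the final stage is a singleton, so the names are exactly $1,\dots,n$; the running time is at most $\cO(\beta n\sum_{j\le i^\ast}k_j)=\cO(\beta(a+3)\,n\lg n)$; and since each station draws a length-$k_j$ string and performs $\beta k_j$ coin-tossing checks in stage~$j$, the random bits total at most $n\sum_{j\le i^\ast}(1+\beta)k_j=\cO(\beta(a+3)\,n\lg n)$, so both are at most $c\,n\lg n$ once $c=c(a)$ is chosen large enough. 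The failure event is contained in (some stage terminates erroneously)~$\cup$~(stage~$i^\ast$ still has a genuine collision): a stage with no error terminates iff it is collision-free, and reaching stage $i^\ast+1$ forces a genuine collision in stage~$i^\ast$; the latter event has probability at most $\binom n2/m_{i^\ast}\le n^{-(a+1)}/2$.

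The crux, and the step I expect to wrestle with, is bounding the probability that \emph{some} stage~$j$ terminates erroneously, i.e., that stage~$j$ is reached, has a multi-ball bin, and every multi-ball bin survives all of its $\beta k_j$ checks. Conditioned on the ball configuration of stage~$j$, the last event has probability $2^{-\beta k_j\sum_b(s_b-1)}$, and $\sum_b(s_b-1)=n-(\#\text{occupied bins})\ge n-m_j$. The trap is that the crude bound ``erroneous termination at stage~$j$ has probability $\le n\cdot 2^{-\beta k_j}$'' is worthless for small~$j$, where $k_j$ is a small constant, so one cannot finish with a constant~$\beta$ along that route; the fact to exploit is that early stages have badly overloaded bins. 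I would split the stages at the least stage $j_1$ with $m_{j_1}\ge n$, into three regimes. \emph{(i)} Stages $j\le j_1-2$: here $m_j<\sqrt n\le n/3$, so $n-m_j\ge 2n/3$ and erroneous termination at stage~$j$ has probability at most $2^{-\beta k_j\cdot 2n/3}\le 2^{-4\beta n/3}$, which summed over the $\cO(\log\log n)$ such stages is $\ll n^{-(a+1)}$. \emph{(ii)} The single stage $j=j_1-1$: here $\sqrt n\le m_{j_1-1}<n$, so $k_{j_1-1}=\lg m_{j_1-1}\ge\tfrac12\lg n$ and $n-m_{j_1-1}\ge 1$, giving probability at most $2^{-\beta k_{j_1-1}}\le n^{-\beta/2}$. \emph{(iii)} Stages $j_1\le j<i^\ast$: there are only $\cO(1)$ of these because $m_{j+1}=m_j^2$, and for each one $2^{-\beta k_j}=m_j^{-\beta}\le n^{-\beta}$ while $\Pr(\text{genuine collision at }j)\le\binom n2/m_j\le n$, so erroneous termination at such a stage has probability at most $n\cdot n^{-\beta}=n^{1-\beta}$. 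Taking $\beta:=2(a+2)$ makes each of the three regimes contribute $\cO(n^{-(a+1)})$, so together with the $n^{-(a+1)}/2$ from stage~$i^\ast$ the total failure probability is at most $n^{-a}$ for all sufficiently large~$n$; fixing $c=c(a)$ afterwards absorbs the $\cO(\beta(a+3))$ factors in the time and random-bit estimates. A last small item I would dispose of is that the per-stage ``collision detected'' status is the same at every station, which it is, since beeps are heard globally and the round-by-round pseudocode convention keeps all stations synchronized.
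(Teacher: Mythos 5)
Your proposal is correct and follows essentially the same route as the paper: the balls-into-bins interpretation of the stages, per-stage error bounds derived from Lemma~\ref{lem:detect-collision} split according to whether the number of bins is below or above $n$, and the termination/time/random-bit bounds obtained by showing that a stage with polynomially many bins is collision-free with high probability and summing the geometric series of stage lengths. Your write-up is somewhat more explicit than the paper's (a formal union bound over the stages up to $i^\ast$, the extra factor $\binom{n}{2}/m_j$ in the overshoot regime, the split of the undershoot regime at $\sqrt{n}$, and the concrete choice $\beta=2(a+2)$), but these are refinements of the same argument rather than a different approach.
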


\begin{proof}
We interpret an iteration of the outer repeat-loop as a stage in a process of throwing $n$ balls into $2^k$ bins and verifying $\beta k$ times for collisions. 
The string selected by a station is the name of the bin.
When at least one collision is detected during an iteration of the outer repeat-loop then variable $k$ gets doubled and another iteration is performed.
An error occurs when there is a collision but it is not detected.

Next we estimate from above the probability of not detecting a collision.
To this end, we consider two cases, depending on which of the inequalities $2^k<n$ or $2^k\ge n$ hold, for a given~$k$.

In the first case, when $2^k<n$, collisions occur with certainty, by the pigeonhole principle.
Let $m$ be the number of occupied bins.
This results in $m\le 2^k$ verifications performed, one for each bin,  where procedure \textsc{Detect-Collision} is called~$\beta k$ times per verification.
By Lemma~\ref{lem:detect-collision}, the probability of not detecting a collision, with just one call of \textsc{Detect-Collision} occurring in each of these verifications, is at most
\[
2^{-n+m}\le 2^{-n+2^k}
\ .
\] 
When $\beta k$ calls of \textsc{Detect-Collision} occur in each verification, as is the case by the design of the algorithm, the probability of not detecting a collision is at most
\[
2^{(-n+2^k)\beta k}
\ .
\]
Intuitively at this point, since $2^k<n$, this probability is maximized for $n=2^k+1$ and it is about $2^{-\beta k}\approx  n^{-\beta}$, as $k\approx \lg n$.

A precise argument to obtain an estimate is by considering two sub-cases, and is as follows.
If it is the sub-case that $2^k< n/2$, then 
\[
2^{(-n+2^k)\beta k}=2^{-\Omega (n)}
\ .
\]
Otherwise, when it is the sub-case that $n>2^k\ge n/2$, then $n-2^k\ge 1$ and $k\ge \lg n-1$, so that we obtain the following estimate:
\[
2^{(-n+2^k)\beta k}\ge 2^{-\beta (\lg n -1)}\ge 2^\beta n^{-\beta}
\ .
\]
We can conclude this sub-case with the following two estimates: $2^{-\Omega (n)}$ and $2^\beta n^{-\beta}$, of which the latter is larger, for sufficiently large~$n$. 
It is sufficient to take $\beta>a$, as then the inequality $2^\beta n^{-\beta}<n^{-a}$ holds for sufficiently large~$n$.

The second case occurs when $2^k\ge n$.
This implies that $k\ge \lg n$.
When a collision occurs in a bin, then it is verified by  at least $\beta \lg n$ calls of procedure \textsc{Detect-Collision}.
This gives the probability of not detecting one such a collision  that is at most $n^{-\beta}$.
Multiple bins with collisions make the probability of not detecting any of them even smaller.
Now it is enough to take $\beta>a$, as then $n^{-\beta}<n^{-a}$ holds.

This completes estimating the probability of error by at most $n^{-a}$, for sufficiently large $\beta$, and all correspondingly large~$n$.

Next, we estimate the probability that the running time is $\cO(n\log n)$.
Let us consider a stage with sufficiently many bins, say, when $k=d\lg n$ for $d>2$.
Then the number of bins is~$2^k=n^d$.
The probability that there is no collision at all in this stage is at least 
\begin{equation}
\label{eqn:d-log-n}
\Bigl(1-\frac{n}{n^d}\Bigr)^n\ge 1-\frac{n}{n^{d-1}}= 1-n^{-d+2}
\ .
\end{equation}
Choosing $d= a+2$ we obtain that the algorithm terminates by the iteration of the outer repeat-loop when $k=d\lg n$ with a probability that is at least $1-n^{-a}$.

One iteration of the outer repeat loop, for some~$k$, is proportional to $k\cdot n$.
The total time spent up to and including $k=d\lg n$ is proportional to
\begin{equation}
\label{eqn:time-and-bits}
\sum_{i=1}^{\lg((a+2)\lg n)} 2^i \cdot n\le n\cdot 2 (a+2)\lg n =\cO(n\log n)
\end{equation}
with a probability that is at least $1-n^{-a}$.

The number of bits generated up to and including the iteration for $k=d\lg n$ is also proportional to~\eqref{eqn:time-and-bits}.
This is because the number of bits generated in one iteration of the main repeat-loop is proportional to $k\cdot n$, similarly as the running time.

To show that the algorithm terminates almost surely, it is sufficient to demonstrate that the probability of a collision converges to zero with $k$ increasing.
The probability of no collision for $k=d\lg n$ is at most $n^{-d+2}$, by~\eqref{eqn:time-and-bits}.
If $k$ grows to infinity then $d=k/\lg n$ increases to infinity as well, and then $n^{-d+2}$ converges  to~$0$ as a function of~$d$.
\end{proof}

Algorithm \textsc{Beep-Naming-MC} is optimal with respect to the following performance measures: the expected running time $\cO(n\log n)$, by Proposition~\ref{pro:lower-bound-time}, the expected number of random bits used~$\cO(n\log n)$, by Proposition~\ref{pro:lower-bound-on-random-bits}, and the probability of error, as determined by the number of used  bits, by Proposition~\ref{pro:probability-of-error}.

\section{Conclusion}

\label{sec:conclusion}

We considered a channel in which a synchronized beeping is the only means of communication.
We showed that unique names can be assigned to the anonymous stations by randomized algorithms.
The algorithms are either Las Vegas or Monte Carlo, depending on whether the number of stations~$n$ is known or not, respectively.
The performance characteristics of the two algorithms, such as the running time, the number of random bits used, and the probability of error, are proved to be asymptotically optimal.

The algorithms we developed rely in an essential manner on synchronization of the channel.
It would be interesting to consider an anonymous asynchronous beeping channel and investigate how to assign names to stations in such a communication environment.


\bibliographystyle{abbrv}

\bibliography{beeping}

\end{document}